\newcommand{\sleeping}{\textsf{SLEEPING}}
\newcommand{\local}{\textsf{LOCAL}}
\newcommand{\ID}{\textrm{ID}}
\newcommand{\DLC}{D1LC}
\newcommand{\E}{\mathbb{E}}
\DeclareMathOperator*\poly{poly}
\begin{document}
%%%%%%%%%%%%%%%%%%%%%%%%%%%%%%%%%%%%%%%%%%%%%%%%%%%

\title{Distributed Coloring in the SLEEPING Model}

\author{Fabien Dufoulon}
\affiliation{
  \institution{Lancaster University}
  \country{United Kingdom}
    }

\author{Pierre Fraigniaud}
\affiliation{
  \institution{CNRS and Université Paris Cité}
  \country{France}
    }

\author{Mikaël Rabie}
\affiliation{
  \institution{CNRS and Université Paris Cité}
  \country{France}
}

\author{Hening Zheng}
\affiliation{
  \institution{???}
  \city{Amsterdam}
  \country{Netherlands}
}

%\renewcommand{\shortauthors}{Trovato et al.}

%%
%% The abstract is a short summary of the work to be presented in the
%% article.
\begin{abstract}
In distributed network computing, a variant of the \local\/ model has been recently introduced, referred to as the \sleeping\/ model. In this model, nodes have the ability to decide on which round they are \emph{awake}, and on which round they are \emph{sleeping}. Two (adjacent) nodes can exchange messages in a round only if both of them are awake in that round. The \sleeping\/ model captures the ability of nodes to save energy when they are sleeping. In this framework, a major question is the following: is it possible to design algorithms that are energy efficient, i.e., where each node is awake for a few number of rounds only, without losing too much on the time efficiency, i.e., on the total number of rounds? This paper answers positively to this question, for one of the most fundamental problems in distributed network computing, namely $(\Delta+1)$-coloring networks of maximum degree~$\Delta$. We provide a randomized algorithm with average awake-complexity constant, maximum awake-complexity $O(\log\log n)$ in $n$-node networks, and round-complexity $\poly\!\log n$. 
\end{abstract}

%% The code below is generated by the tool at http://dl.acm.org/ccs.cfm.
\begin{CCSXML}
<ccs2012>
<concept>
<concept_id>10003752.10003809.10010172</concept_id>
<concept_desc>Theory of computation~Distributed algorithms</concept_desc>
<concept_significance>500</concept_significance>
</concept>
</ccs2012>
\end{CCSXML}
\ccsdesc[500]{Theory of computation~Distributed algorithms}

%%
%% Keywords. The author(s) should pick words that accurately describe
%% the work being presented. Separate the keywords with commas.
\keywords{Energy-efficient algorithms, distributed algorithms, graph coloring. }

%% \received{20 February 2007}
%% \received[revised]{12 March 2009}
%% \received[accepted]{5 June 2009}

%%
%% This command processes the author and affiliation and title
%% information and builds the first part of the formatted document.
\maketitle

%%%%%%%%%%%%%%%%%%%%%%%%%%%%%%%%%%%%%%%%%%%%%%%%%%%
\section{Introduction}
%%%%%%%%%%%%%%%%%%%%%%%%%%%%%%%%%%%%%%%%%%%%%%%%%%%

The \sleeping\/ model is a recent model of distributed computing introduced in~\cite{chatterjee2020}. This model captures the ability of computing entities to turn on and off at will for a prescribed amount of time, e.g., for saving energy. Specifically, we consider the \local\/ version of the \sleeping\/ model. That is, we are considering the standard computing framework (see, e.g., \cite{Peleg2000}) where $n\geq 1$ fault-free nodes are connected as an $n$-vertex graph $G=(V,E)$. Each node~$v\in V$ is provided with a distinct identifier, denoted by $\ID(v)$, which is the only input (and initial knowledge) given to each node before computation starts. Computation performs synchronously as a sequence of rounds. All nodes execute the same algorithm, and start at the same time, at round~1. At each round, every node sends a message to each of its neighbors, receives the messages sent by its neighbors, and performs some individual computation. The \sleeping\/ model assumes that, in addition, every node can either be \emph{awake} or \emph{sleeping}. Initially, all nodes are awake. When a node is awake, it performs as in the \local\/ model, with an additional functionality providing that node with the ability to become sleeping for a prescribed number~$r\geq 1$ of rounds. This number~$r$ is determined by the algorithm, and may depend on the state of the node when it decides to become asleep. When a node is sleeping, it cannot send messages, and all the messages sent to that node are lost. If a node turned asleep at round~$t$, for $r$ rounds, then it sleeps during rounds $t+1,\dots,t+r$, and wakes up at round $t+r+1$. Every node thus alternates between awake and sleeping periods, until it terminates. The \emph{round complexity} of an algorithm running in~$G$ is the number of rounds until all nodes of~$G$ terminates. The  \emph{awake complexity} of the algorithm is either the \emph{maximum}, taken over all nodes~$v$, of the number of rounds during which $v$ is awake before it terminates (worst case awake complexity), or the \emph{average} of the number of rounds during which the nodes of~$G$ are awake before they terminate (average awake complexity). 

%--------------------------------------------------
\subsection{Our Results}
%--------------------------------------------------

We consider the classical $(\Delta+1)$-coloring problem, that is, every node~$v$ of the input graph $G=(V,E)$ must output a color $c(v)\in\{1,\dots,\Delta+1\}$, where $\Delta$ denotes the maximum degree of~$G$, such that $c$ is a proper coloring of~$G$. We also consider stronger variants of the problem, including $(\deg+1)$-coloring in which every node~$v$ must output a color $c(v)\in\{1,\dots,\deg(v)+1\}$, and the list-coloring variants of these two problems: every node~$v$ is given a list $L_v$ of colors as input, of size at least $\Delta+1$, or at least $\deg(v)+1$ depending on the problem at hand, and must output a color $c(v)\in L_v$. In this paper, we solve the most constrained problem. 

\begin{theorem}
\label{thm:sleepingColoring}
There exists a randomized (Las Vegas) algorithm for $(\deg+1)$-list-coloring whose performances in the  \sleeping{} \local{} model are,  
\begin{itemize}
    \item $O(1)$ expected average round-complexity, and thus $O(1)$ expected average awake complexity;
    \item $O(\log \log n)$ worst-case awake complexity, with high probability; 
    \item $O(\poly\!\log n)$ round complexity, with high probability.
\end{itemize} 
\end{theorem}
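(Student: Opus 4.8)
The plan is to combine the classic "shattering" paradigm for distributed $(\deg+1)$-list-coloring with a careful accounting of who is awake when. I would proceed in three layers, matching the three bullets.

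\textbf{Layer 1: an energy-cheap pre-shattering phase.} First I would run an $O(1)$-round randomized coloring trial of the Schneider--Wattenhofer / Barenboim--Elkin type: every uncolored node picks a uniformly random color from (the still-available part of) its list and keeps it if no neighbor picks the same. One trial is $O(1)$ rounds and $O(1)$ awake rounds for every node. The standard shattering analysis shows that after a constant number of such trials the set of still-uncolored nodes induces components of size $\poly\!\log n$ with high probability, and moreover the expected number of uncolored nodes is a small constant fraction of $n$. The key point for the \emph{average} awake complexity is that once a node is colored it goes to sleep forever (it only needs to have broadcast its final color, which can be piggybacked on the last trial round). Hence the expected total awake-rounds summed over all nodes after the pre-shattering is $O(n)$ from the colored majority plus whatever the surviving nodes spend; iterating the trial a constant number of times and summing the geometric series of survival probabilities keeps the total at $O(n)$, i.e. $O(1)$ expected average awake complexity. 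Because each round a node is awake it is also "spending" a round, the same bound gives $O(1)$ expected average round complexity as stated.

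\textbf{Layer 2: low-awake post-shattering on small components.} After shattering, each residual component $H$ has $|V(H)| = O(\poly\!\log n)$, so it has a distinct identifier space of size $\poly\!\log n$ and we can afford algorithms whose awake complexity is $O(\log\log|V(H)|) = O(\log\log n)$. Here I would invoke (or re-derive) a deterministic $(\deg+1)$-list-coloring routine on small-identifier graphs — e.g. via network decomposition or the Linial-style color reduction made energy-efficient by the "clock" technique: nodes sleep through the rounds in which they have nothing to contribute and wake only on a schedule determined by their position in a recursive decomposition. The recursion depth is $O(\log^* + \text{decomposition depth})$ but the number of rounds in which any \emph{single} node must actually be awake telescopes to $O(\log\log n)$; the remaining rounds it sleeps. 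This is the step where I expect to spend the most care: getting a deterministic (or high-probability) small-instance subroutine whose \emph{worst-case per-node awake} count is $O(\log\log n)$ while its total round count stays $\poly\!\log n$ is exactly the non-trivial trade-off, and it is where the sleeping model departs most sharply from the ordinary \local\ model, in which $O(\log\log n)$-round coloring of $\poly\!\log n$-size instances is already delicate.

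\textbf{Layer 3: synchronization and the round bound.} The two layers must be glued so that a node knows, without listening, when the next phase begins. Since the pre-shattering phase has a fixed $O(1)$ length and the post-shattering phase has an a-priori $\poly\!\log n$ worst-case length (high probability), every node can compute a global schedule and sleep through the rounds assigned to components it is not in. A node in a large residual component that "fails" the high-probability size bound falls back to a safe $O(n)$-round, $O(n)$-awake deterministic coloring, which happens with probability $1/\poly(n)$ and therefore does not affect any of the three claimed bounds (it adds $n \cdot \frac{1}{\poly(n)} = o(1)$ to the expected average, and the worst-case/round bounds are only claimed with high probability). Putting the layers together: expected average awake and round complexity $O(1)$ from Layer 1, worst-case awake $O(\log\log n)$ w.h.p. from Layer 2, and total rounds $\poly\!\log n$ w.h.p. as the sum of the $O(1)$ pre-phase and the $\poly\!\log n$ post-phase. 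The main obstacle, to restate, is Layer 2: designing the small-instance subroutine so that its awake cost per node is only doubly-logarithmic.
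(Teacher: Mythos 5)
There is a genuine gap, and it sits exactly where your argument needs to be quantitative: the claim in Layer~1 that a \emph{constant} number of random color trials shatters the uncolored nodes into components of size $\poly\!\log n$ with high probability. The standard shattering lemma requires the per-node survival probability to be polynomially small in $\Delta$ (about $\Delta^{-c}$), and since one trial only colors a node with constant probability (survival roughly $3/4$ per trial), you would need $\Theta(\log\Delta)$ trials to reach that regime --- which for $\Delta = n^{\Theta(1)}$ is $\Theta(\log n)$ awake rounds and already destroys the $O(\log\log n)$ worst-case awake bound. With only $O(1)$ trials each node survives with constant probability, and constant-probability survival does not shatter: on a bounded-degree expander, say, the surviving subgraph contains a component of size $\Theta(n)$ with high probability. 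This failure propagates to the first bullet as well: a constant fraction of the nodes would then enter your Layer~2, paying $\Omega(\log\log n)$ awake rounds and $\poly\!\log n$ total rounds each, so the expected node-averaged round complexity would be $\poly\!\log n$, not $O(1)$. Finally, your Layer~2 subroutine (a small-instance coloring with per-node awake cost $O(\log\log n)$) is left as an acknowledged obstacle rather than proved, so even granting shattering the proof is incomplete.

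For comparison, the paper avoids both problems by a different split. The cheap random trials are run for $\Theta(\log\log n)$ iterations (not $O(1)$): each still-uncolored node gets colored with probability at least $1/4$ per iteration and terminates immediately upon success, so the expected number of survivors after the first phase is $n/\poly\!\log n$; their later $\poly\!\log n$-round cost then contributes only $O(1)$ to the node-averaged round complexity, while the phase itself costs every node at most $O(\log\log n)$ awake rounds. The residual instance is then handled by \emph{degree reduction} rather than component-size shattering: an $O(\log^* n)$-round algorithm of Halld\'orsson, Kuhn, Nolin and Tonoyan colors, with high probability, every remaining node of degree at least $\log^7 n$, and the deterministic sleeping-model $(\deg+1)$-list-coloring algorithm of Barenboim and Maimon, with awake complexity $O(\log^* N + \log\Delta)$ and round complexity $O(\log^* N + \Delta^2)$, finishes the job with $\Delta = O(\log^7 n)$, i.e., $O(\log\log n)$ awake rounds and $\poly\!\log n$ rounds. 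This is precisely the missing piece you flag in Layer~2: the needed subroutine exists in the literature, but it is parameterized by the maximum degree of the residual graph, not by component size, which is why the degree-reduction route closes the argument while the $O(1)$-trial shattering route, as you state it, does not.
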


%--------------------------------------------------
\subsection{Related Work}
%--------------------------------------------------

The $(\Delta+1)$-coloring problem has a long history, starting from the seminal paper by Linial~\cite{linial1992}, but we briefly survey the most recent achievements. The current state of the art for \emph{deterministic} algorithms in the \local\/ model is contrasted, as the fastest algorithm varies depending on the value of $\Delta\in\{1,\dots,n-1\}$. There are essentially three algorithms with respective round-complexities 
$\tilde{O}(\sqrt{\Delta})+O(\log^*n)$~\cite{barenboim2018,fraigniaud2016},
$O(\log n\cdot \log^2\Delta)$~\cite{ghaffari2022}, and 
$\tilde O(\log^2 n)$~\cite{GhaffariG2023}. 
As far as randomized (Las Vegas) algorithms are concerned, the best known algorithms perform, with high probability, in $\tilde{O}(\log^2 \log n)$ rounds~\cite{ChangLP18,GhaffariG2023,HalldorssonKNT22}, also for $(\deg+1)$-list-coloring.
%\todo{\tiny right?} 

The \sleeping\/ model was introduced only recently by Chatterjee et al.~\cite{chatterjee2020}. Despite the novelty of the model, there have already been significant efforts to understand the awake complexities of fundamental distributed problems. These include computing a maximal independent set (MIS)~\cite{chatterjee2020,ghaffari2022-1, HPR22, dufoulon2023,GP23}, a coloring~\cite{Barenboim2021}, an approximate matching or vertex cover~\cite{ghaffari2022-1}, a spanning tree~\cite{Barenboim2021} or a minimum spanning tree~\cite{AMP24}. We briefly highlight some results on the first two problems, as these are the most relevant to this work. On the one hand, \cite{Barenboim2021} gave a deterministic distributed algorithm solving $(\Delta+1)$-coloring in $O(\log \Delta+ \log^* n)$  worst-case awake complexity, which we use as a subroutine in our work. On the other hand, \cite{dufoulon2023} showed that 
%despite the $\Omega\left(\min\left\{\frac{\log \Delta}{\log \log \Delta}, \sqrt{\frac{\log n}{\log \log n}}\right\}\right)$ (worst-case) round complexity lower bound of~\cite{Kuhn_2016}, 
maximal independent set can be solved (with a randomized algorithm) in $O(\log \log n)$ worst-case awake complexity. This was a somewhat surprising result, because it meant that the worst-case awake complexity could be exponentially better than the round complexity for maximal independent set, due to the $\Omega(\min\{\log \Delta/(\log \log \Delta), \sqrt{\log n/(\log \log n)}\})$ (worst-case) round complexity lower bound of~\cite{Kuhn_2016}. In this work, we show that $(\Delta+1)$-coloring can also be solved (with a randomized algorithm) in $O(\log \log n)$ worst-case awake complexity.
%$\Omega\left(\min\left\{\frac{\log \Delta}{\log \log \Delta}, \sqrt{\frac{\log n}{\log \log n}}\right\}\right)$
%n the short span of half a decade, the the \sleeping\/ model
%The \sleeping\/ model was recently introduced in~\cite{chatterjee2020}. They also shown that MIS in general graphs can be solved in $O(1)$ average awake complexity, with $O(\log n)$ worst-case awake complexity and $O(\log^{3.41} n)$ round complexity. Ghaffari and Portmann in~\cite{ghaffari2022-1} subsequently improved this result, obtaining $O(1)$ average awake complexity and $O(\log n)$ round complexity simultaneously. 
%A significant breakthrough happened in~\cite{dufoulon2023}, where it was shown that MIS could be solved in $O(\log \log n)$ worst-case awake complexity. The original version achieved this at the cost of a polynomial in $n$ round complexity. Subsequent solutions~\cite{GP23,dufoulon2023Arxiv} that kept the $O(\log \log n)$ worst-case awake complexity while achieving polynomial in $\log n$ round complexity.

%%%%%%%%%%%%%%%%%%%%%%%%%%%%%%%%%%%%%%%%%%%%%%%%%%%
\section{Distributed Coloring}
%%%%%%%%%%%%%%%%%%%%%%%%%%%%%%%%%%%%%%%%%%%%%%%%%%%

This section is entirely dedicated to the proof of Theorem \ref{thm:sleepingColoring}. Our algorithm for $(\deg+1)$-list-coloring is split into three phases. 
\begin{itemize}
    \item In the first phase, we run a simple and elegant randomized distributed procedure directly adapted from Algorithm~19 in~\cite{barenboim2013}, for $O(\log\log n)$ iterations (and thus for $O(\log \log n)$ rounds).  Nodes that are properly colored in Algorithm \ref{alg:3} terminate, while the nodes that remain uncolored throughout the whole execution of Algorithm \ref{alg:3} continue to the second phase. We shall show that, in expectation, there are only $O(n/\poly\!\log n)$ such uncolored nodes. 
    
    \item In the second phase, we run the $O(\log^* n)$-round algorithm of~\cite{HalldorssonKNT22} to extend the coloring obtained in the first phase, and to reduce the maximum degree of the remaining uncolored nodes to $O(\log^7 n)$. 
    
    \item Finally, the third phase consists of running on the remaining uncolored nodes the state-of-the-art deterministic distributed algorithm for $(\deg+1)$-list-coloring derived from the scheme described in~\cite{Barenboim2021}. This algorithm has $O(\log\Delta + \log^*n)$ worst-case awake complexity. Therefore, since the remaining graph has maximum degree $O(\log^7 n)$, the third phase will run in $O(\log\log n)$ worst-case awake rounds.
\end{itemize}

\begin{algorithm}[tb]
\SetNoFillComment
\DontPrintSemicolon
\caption{Randomized distributed procedure for $(\deg+1)$-list-coloring directly adapted from Algorithm~19 in~\cite{barenboim2013}. Code of node $v$ parameterized by the number~$K$ of iterations, and the input list $L_v\subset \mathbb{N}\smallsetminus\{0\}$ of colors  given to~$v$.}\label{alg:3}
\For{$k=1$ {\rm \textbf{to}} $K$}{
    $c_v \gets \begin{cases}
        0\quad&\text{with probability } \frac{1}{2}\\
        \text{a random color in } L_v  \quad&\text{each with probability } \frac{1}{2\, |L_v|}
    \end{cases}$\\
    \textbf{send} color $c_v$ \textbf{to} all neighbors \textsf{// First round //} \\ 
    \textbf{receive} colors \textbf{from} neighbors  \\
    $T_v\gets \varnothing$\\
    \For{{\rm \textbf{each}}  color $c_u$ received from a neighbor $u$}{
        $T_v \gets T_v \cup \{c_u\}$\;
    }
    
    \eIf{$(c_v \neq 0) \land (c_v \not\in T_v)$}{
        adopt $c_v$ as the final color of $v$ \\
        \textbf{send}  ``\texttt{ADOPT} $c_v$'' \textbf{to} all neighbors \textsf{// Second round //} \\
        \textbf{terminate} %\KwRet\;
    }{
        \textbf{receive} adopt messages \textbf{from} neighbors \textsf{// Second round //} \\
        \For{{\rm \textbf{each}}  message ``\texttt{ADOPT} $c_u$'' received from a neighbor $u$}{
            $L_v \gets L_v \smallsetminus \{c_u\}$\;
        }
    }
}
\end{algorithm}

More specifically, we first run Algorithm~\ref{alg:3} for $K=O(\log\log n)$ rounds. So the first phase takes merely $O(\log\log n)$ rounds. For upper bounding the round complexity and awake complexity of the second and third phases, we use the following two results, taken from \cite{HalldorssonKNT22} and \cite{Barenboim2021}, respectively.

\begin{lemma}[Theorem~1 in~\cite{HalldorssonKNT22}] \label{lemma:1}
Let $G = (V,E)$ be an $n$-node graph with maximum degree at most $\Delta$, and let $V_H$ be the nodes of G of degree at least $\log^7 \Delta$. For every positive constant $c > 0$, there exists a randomized distributed algorithm running in $O(\log^* n)$ rounds in the \local{} model satisfying that, for every $(\deg+1)$-list-coloring instance for $G$, computes a partial proper coloring of the nodes in $G[V_H]$ such that, for every node $v \in V_H$, the probability that $v$ is not colored at the end of the algorithm is at most $\Delta^{-c}$.
\end{lemma}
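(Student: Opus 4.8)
Since the statement is quoted verbatim as Theorem~1 of~\cite{HalldorssonKNT22}, in the paper we will simply cite it; below I sketch the ingredients one would reprove, which also clarify why the degree threshold $\log^7\Delta$ and the failure probability $\Delta^{-c}$ come paired. The plan is to build everything around the one-round random color trial already appearing in Algorithm~\ref{alg:3}: every uncolored node idles with probability~$1/2$ or picks a uniformly random color from its current list, permanently adopts that color if no neighbor picked the same one that round, and then deletes from its list every color permanently adopted by a neighbor. First I would establish two invariants. \emph{Slack monotonicity:} the quantity $|L_v|-d_v$, where $d_v$ counts the still-uncolored neighbors of $v$, never decreases, since every color deleted from $L_v$ can be charged to a distinct neighbor leaving the uncolored set (and two \emph{non-adjacent} neighbors may even adopt a common color, costing only one palette unit for two departures); as $|L_v|-d_v\ge 1$ at the start, slack stays positive forever. \emph{Expected shrinkage:} a short computation using this slack gives $\Pr[v\text{ is colored this round}\mid\text{history}]\ge 1/4$, hence $\E[d_v^{(t+1)}\mid\text{history}]\le \tfrac{3}{4}\,d_v^{(t)}$, so the uncolored degree of any fixed node contracts geometrically in expectation.

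The heart of the argument, and the step I expect to be the main obstacle, is to upgrade ``contracts in expectation'' to ``contracts with probability $1-\Delta^{-c}$, for each individual node,'' using \emph{only} the hypothesis $d_v\ge\log^7\Delta$ and no global sparsity assumption. The difficulty is dependence: whether a neighbor $u$ of $v$ succeeds depends on $u$'s own neighborhood and on colors fixed elsewhere, so the $d_v$ success indicators around $v$ are correlated. Following~\cite{HalldorssonKNT22} I would split into two regimes. \emph{(i)~Unclustered neighborhoods:} each indicator is a bounded-degree function of $O(1)$ underlying random choices with small individual influence, so a read-$k$ Chernoff bound or Talagrand's inequality gives a tail of order $\exp(-\Omega(d_v))=\exp(-\Omega(\log^7\Delta))$, far below $\Delta^{-c}$; additionally one spends a single preliminary round of trials to \emph{generate} permanent slack $\Omega(\text{local sparsity})$ around sparse high-degree nodes, which is what later makes them colorable in only $O(\log^*n)$ rounds. \emph{(ii)~Clustered neighborhoods:} these are handled by the almost-clique decomposition of~\cite{HalldorssonKNT22}; a clustered node of degree $\ge\log^7\Delta$ lies in an almost-clique of size $\Omega(\log^7\Delta)$, colored by a dedicated within-clique procedure (a synchronized random-proposal trial inside the clique), which again fails per node only with probability $\exp(-\Omega(\text{clique size}))$.

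Finally I would account for the $O(\log^*n)$ round budget. Once a node carries slack $\Omega(d)$ — which the preliminary round arranges for high-degree nodes — one runs \textsf{MultiColorTrial}/\textsf{ColorBidding}: test many colors simultaneously and recursively reduce the color space in $\log$-sized chunks, coloring the node within $O(\log^*\Delta)=O(\log^*n)$ rounds while keeping the per-level failure probabilities polynomially small in $\Delta$; the exponent $7$ is essentially the amount of slack one needs so that the uncolored degree provably stays $\gg\log\Delta$ through every level of this $\log^*$-recursion, and hence the per-level union bounds go through. A last union bound over the $O(\log^*n)$ rounds and recursion levels — each contributing failure $\exp(-\Omega(\log^7\Delta))$ — then shows that every $v\in V_H$ remains uncolored with probability at most $\Delta^{-c}$. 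The restriction to $G[V_H]$ changes nothing: a node of $V_H$ only ever loses a palette color to a $V_H$-neighbor, and since its list has size $\ge\deg_G(v)+1\ge\deg_{G[V_H]}(v)+1$, all the slack bookkeeping above holds verbatim inside $G[V_H]$.
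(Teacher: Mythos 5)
Your approach matches the paper's: Lemma~\ref{lemma:1} is an external result, quoted as Theorem~1 of~\cite{HalldorssonKNT22}, and the paper offers no proof of its own beyond the citation, which is exactly what you propose to do. Your accompanying sketch of the ingredients of the external argument (slack generation, almost-clique decomposition, \textsf{MultiColorTrial}-style recursion within an $O(\log^* n)$ budget) is a faithful, if informal, summary and need not appear in the paper.
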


%\begin{corollary}
%    Let $G = (V,E)$ be an $n$-node graph and let $V_H$ be the nodes of G of degree at least $\log^7 n$. Then, there is an $O(\log^* n)$-round randomized distributed algorithm in the \local{} model that for a given $(deg + 1)$-list coloring instance on $G$ computes a partial proper coloring of the nodes in $V_H$ such that the uncolored nodes in $V_H$ form components of $\poly(\log n)$ size.
%\end{corollary}

\begin{lemma}[Corollary of Theorem 4.2 in~\cite{Barenboim2021}] \label{lemma:2}
    Let $G = (V,E)$ be an $n$-node graph with maximum degree at most $\Delta$, and whose nodes have identifiers of size at most $N$ bits. $(\deg+1)$-list-coloring can be solved deterministically with $O(\log^*N + \log\Delta)$ (worst-case) awake-complexity, and $O(\log^*N + \Delta^2)$ round-complexity.
\end{lemma}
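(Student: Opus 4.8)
The plan is to obtain this statement as a corollary of Theorem~4.2 of \cite{Barenboim2021}, which provides a deterministic $(\Delta+1)$-coloring algorithm in the \sleeping{} \local{} model with $O(\log^* n + \log \Delta)$ worst-case awake complexity (the result recalled in the related-work discussion). That algorithm proceeds in two stages: an initial Linial-style color reduction~\cite{linial1992} that turns the input identifiers, viewed as a proper coloring, into a proper $O(\Delta^2)$-coloring, followed by a carefully scheduled color-reduction stage that brings the number of colors down to $\Delta+1$ while keeping every node awake for only $O(\log \Delta)$ rounds. I would derive the corollary through three modifications, none of which affects the asymptotic complexities.

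First, I would upgrade $(\Delta+1)$-coloring to $(\deg+1)$-list-coloring. The key observation is that the color-reduction stage is \emph{palette-respecting}: each time a node~$v$ is asked to (re)select a color, it only needs to avoid the colors already committed by the neighbors it is scheduled against, and the correctness argument merely invokes the fact that a free color exists because $v$ has at most $\deg(v)\le\Delta$ neighbors while its palette has size $\Delta+1$. In the $(\deg+1)$-list-coloring setting, $v$ draws its color from its input list $L_v$ instead of from $\{1,\dots,\Delta+1\}$, and since $|L_v|\ge \deg(v)+1$ strictly exceeds the number of neighbors $v$ can conflict with, a free color in $L_v$ always exists at every (re)selection step. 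Hence the identical schedule and conflict-freedom invariants go through with $\{1,\dots,\Delta+1\}$ replaced by the heterogeneous lists~$L_v$; note that neither the Linial reduction nor the wake-up schedule depends on the palettes being uniform, only on the degree bound~$\Delta$ and on the structure of the initial $O(\Delta^2)$-coloring.

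Second, I would make the dependence on identifier size explicit. The only place $n$ enters the awake bound is the initial Linial reduction, which starts from the identifiers regarded as a proper coloring with at most $2^N$ colors (identifiers have at most $N$ bits) and iterates down to an $O(\Delta^2)$-coloring in $O(\log^*(2^N))=O(\log^* N)$ rounds; hence $\log^* n$ becomes $\log^* N$. Third, I would read off the round complexity, which the awake bound hides because nodes sleep while others take their turn. The Linial phase keeps all nodes awake, so its span equals its awake cost, $O(\log^* N)$; the reduction phase runs its schedule over the $O(\Delta^2)$ color classes produced by Linial's step, so its span is $O(\Delta^2)$ rounds even though each node is awake during only $O(\log \Delta)$ of them. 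Summing the two spans yields the claimed $O(\log^* N + \Delta^2)$ round complexity, while the awake complexity is the unchanged $O(\log^* N + \log \Delta)$ inherited from Theorem~4.2.

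The main point requiring care is the palette-respecting verification of the second modification: one must check, at \emph{every} recoloring step of the scheduled reduction, that the set of neighbors whose committed colors $v$ must avoid has size at most $\deg(v)$, so that a list of size $\deg(v)+1$ always suffices, and that substituting the global palette by the lists $L_v$ preserves the invariant guaranteeing that no two adjacent nodes ever commit to the same color. Once this invariant is re-established list-by-list, the awake bound is untouched and the round-complexity bound is a routine bookkeeping of the schedule's span.
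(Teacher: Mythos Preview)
The paper does not supply its own proof of this lemma: it is stated as a black-box corollary of Theorem~4.2 in~\cite{Barenboim2021} and immediately used, with no accompanying argument. Your proposal therefore goes well beyond what the paper provides, by sketching \emph{why} the statement plausibly follows from that theorem via three modifications (extending to $(\deg+1)$-list-coloring, replacing $\log^* n$ by $\log^* N$, and bookkeeping the span to extract the $O(\Delta^2)$ round bound).

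Your outline is reasonable and broadly matches the standard structure of such results (Linial reduction followed by a scheduled color-class sweep), but note that its validity hinges on the internal structure of the algorithm in~\cite{Barenboim2021}, which the present paper does not describe. In particular, the claims that the reduction stage is palette-respecting at every step and that the schedule's span is $O(\Delta^2)$ are assertions about that external algorithm, not facts established here; a full proof would need to open~\cite{Barenboim2021} and verify these properties directly. Since the paper treats the lemma as an imported result rather than something to be proved, there is nothing further to compare your derivation against.
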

%$(deg + 1)$-list coloring problem

By setting $\Delta = n-1$ in Lemma \ref{lemma:1}, the second phase takes $O(\log^* n)$ rounds, resulting in a partial $(\deg+1)$-list-coloring that ensures that, with high probability, the remaining uncolored nodes have degree at most $O(\log^7 n)$. Subsequently, we apply Lemma \ref{lemma:2} to establish that all of the uncolored nodes of the second phase are appropriately colored during the third phase. Moreover, the third phase has $O(\log^*n + \log \log n) = O(\log \log n)$ (worst-case) awake-complexity, and $O(\log^*n + (\log^7 n)^2)$  round-complexity.

It remains to prove that the proposed algorithm has (expected) $O(1)$ node-averaged round-complexity. We first show the following. 

\begin{lemma}\label{lem:basic}
    In each iteration of the first phase, every node $v$ that has not yet terminated gets colored (and thus terminates) with probability at least $\frac{1}{4}$.
\end{lemma}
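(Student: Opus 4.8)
The plan is to isolate the one structural fact that makes Algorithm~\ref{alg:3} work, namely a list-size invariant, and then turn the crank on a short probability computation. The invariant I would prove is: \emph{at the start of every iteration of the first phase, every node $v$ that has not yet terminated satisfies $|L_v| \ge d_v + 1$, where $d_v$ is the number of neighbors of $v$ that have not yet terminated.} I would argue this by induction on the iteration index. The base case holds because the input is a $(\deg+1)$-list-coloring instance, so initially $|L_v| \ge \deg(v)+1$ and every neighbor is live. For the inductive step, the only events that change $L_v$ or $d_v$ within an iteration are neighbors terminating: if a set $S$ of live neighbors of $v$ adopt colors in this iteration, then $d_v$ decreases by exactly $|S|$, whereas $v$ removes from $L_v$ only the colors carried by the \texttt{ADOPT} messages it receives, which are at most $|S|$ distinct colors (distinct live neighbors may even adopt the same color). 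Hence $|L_v| - d_v$ never decreases, and the invariant persists. (As a byproduct this also shows the partial coloring stays proper: adjacent nodes never adopt in the same iteration since each sees the other's chosen color in $T$, and an adopted color is immediately deleted from every live neighbor's list.)

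Next, I would fix an arbitrary iteration and a node $v$ that has not terminated, write $\ell = |L_v|$ and $d = d_v$, so $\ell \ge d+1$ by the invariant. Node $v$ gets colored in this iteration precisely when $c_v \ne 0$ and $c_v \notin T_v$, so
\[
\Pr[v \text{ colored}] \;=\; \Pr[c_v \ne 0 \wedge c_v \notin T_v] \;\ge\; \Pr[c_v \ne 0] - \Pr[c_v \ne 0 \wedge c_v \in T_v] \;=\; \tfrac12 - \Pr[c_v \ne 0 \wedge c_v \in T_v],
\]
since $\Pr[c_v \ne 0] = \tfrac12$ by construction. It then remains to show the subtracted term is at most $\tfrac14$.

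For that I would use a union bound over the live neighbors of $v$, using that the colors $c_v$ and $\{c_u\}_u$ are drawn independently across nodes and that $0 \notin L_v$ (so conditioning on $c_v \ne 0$ does not disturb the neighbors' distributions, and $c_v \in T_v$ with $c_v\neq0$ genuinely means a collision on a nonzero value with a \emph{live} neighbor, terminated neighbors sending nothing). For a fixed live neighbor $u$,
\[
\Pr[c_u = c_v \ne 0] \;=\; \sum_{c \,\in\, L_v \cap L_u} \frac{1}{2\ell}\cdot\frac{1}{2|L_u|} \;=\; \frac{|L_v \cap L_u|}{4\,\ell\,|L_u|} \;\le\; \frac{1}{4\ell},
\]
using $|L_v \cap L_u| \le |L_u|$. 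Summing over the at most $d$ live neighbors gives $\Pr[c_v \ne 0 \wedge c_v \in T_v] \le \frac{d}{4\ell} \le \frac{d}{4(d+1)} < \frac14$, hence $\Pr[v \text{ colored}] \ge \frac12 - \frac{d}{4(d+1)} \ge \frac14$.

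The one place that needs care — and the only place with real content — is the invariant. The probability estimate afterwards is entirely routine. The subtlety in the induction is that a single iteration may strip several colors from $L_v$ while simultaneously terminating several neighbors, and the argument only goes through because those two counts are matched (indeed the deleted-color count is at most the terminated-neighbor count), which is exactly what keeps $|L_v| - d_v$ from dropping below $1$.
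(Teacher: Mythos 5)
Your proof is correct and follows essentially the same route as the paper's: a factor $\tfrac12$ for $c_v\neq 0$, plus a union bound over collisions with live neighbors, each contributing $O(1/|L_v|)$, with the count of live neighbors controlled by the list size. The only real difference is that you state and prove by induction the invariant $|L_v|\ge d_v+1$ (that $|L_v|-d_v$ never decreases), which the paper uses only implicitly in its terse union-bound step, and you work with joint rather than conditional probabilities — a cosmetic variation; your explicit invariant is a welcome tightening of that step.
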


\begin{proof}
From the algorithm's description, we have 
    \begin{align*}
        \Pr[v \text{ terminates}]
        &= \Pr[(c_v \neq 0) \wedge (\forall\, u \in N(v): c_u \neq c_v)] \\
        &= \Pr[\forall\, u \in N(v): c_u \neq c_v \mid c_v \neq 0]\cdot\Pr[c_v \neq 0]\\
        &= \frac{1}{2}\cdot\Pr[\forall\, u \in N(v): c_u \neq c_v \mid c_v \neq 0].
    \end{align*}
    Moreover,
    \begin{align*}
        \Pr[c_u = c_v \mid c_v \neq 0]
%        &= \Pr[c_u = c_v \mid c_v \neq 0 \wedge c_u = 0]\cdot\Pr[c_u = 0] + \\
 %       &\quad\,\Pr[c_u = c_v \mid c_v \neq 0 \wedge c_u \neq 0]\cdot\Pr[c_u \neq 0]\\
        &= \Pr[c_u = c_v \mid c_v \neq 0 \wedge c_u \neq 0]\cdot\Pr[c_u \neq 0]\\
        &= \frac{1}{2}\cdot\Pr[c_u = c_v \mid c_v \neq 0 \wedge c_u \neq 0]% \\
        %&
        \leq \frac{1}{2}\cdot\frac{1}{|L_v|}.
    \end{align*}
    Then, by union bound, $
        \Pr[\exists\, u \in N(v) : c_u = c_v \mid c_v \neq 0] \leq (|L_v| - |T_v|)\cdot (1/ 2 |L_v|)  < 1/2$.
    Thus,
    \begin{align*}
        \Pr[\forall\, u \in N(v): c_u \neq c_v \mid c_v \neq 0] 
        %= 1 - \Pr[\exists\, u \in N(v) \text{ s.t. } c_u = c_v \mid c_v \neq 0]
        > 1 - \frac{1}{2}
        > \frac{1}{2}.
    \end{align*}
    Hence, as claimed, $
        \Pr[v \text{ terminates}]
        = \frac{1}{2}\cdot\Pr[\forall\, u \in N(v): c_u \neq c_v \mid c_v \neq 0]
        > (1/2) \cdot (1/2)
         = 1/4$.
\end{proof}

Let $t_f = O(\log \log n)$ be the number of rounds of the first phase. For every $i\in\{1,\dots,t_f\}$, let $X_i$ be a random variable indicating the number of nodes that run for either $2i-1$ or $2i$ rounds. By Lemma~\ref{lem:basic}, it holds that $\E[X_i] \leq n/4^{i-1}$. Moreover, the expected node-averaged round complexity is upper bounded by 
\[
\E\left[\sum_{1 \leq i \leq t_f} (X_i / n \cdot  2i) + X_{t_f+1} / n \cdot O(\poly \log n)\right], 
\]
as we have already shown that the round complexity of our  algorithm is $O(\poly \log n)$. By linearity of expectation, the expected node-averaged round complexity is upper bounded by 
\[
\frac{1}{n} \cdot \sum_{1 \leq i \leq t_f} (\E[X_i] \cdot  2i) + \E[X_{t_f+1}] \cdot O(\poly \log n). 
\]
Choosing $t_f$ large enough ensures that the expected node-averaged round complexity is constant, as desired.
\qed

\section{Conclusion}
%%%%%%%%%%%%%%%%%%%%%%%%%%%%%%%%%%%%%%%%%%%%%%%%%%%

Previous research~\cite{dufoulon2023} on the \sleeping\/ model showed that the $O(\log \log n)$ worst-case awake complexity of maximal independent set could be exponentially smaller than its worst-case round complexity. In this work, we show a similar $O(\log \log n)$ worst-case awake complexity upper bound for $(\Delta+1)$-coloring, but this is only polynomially better than the state-of-the-art round complexity upper bound for the problem. This naturally raises the following question: does the $(\Delta+1)$-coloring problem admit a randomized algorithm with $o(\log \log n)$ worst-case awake complexity? And if not, can we hope for a general $\Omega(\log \log n)$ worst-case awake complexity lower bound for a large class of local distributed problems?
%the $\Omega\left(\min\left\{\frac{\log \Delta}{\log \log \Delta}, \sqrt{\frac{\log n}{\log \log n}}\right\}\right)$ (worst-case) round complexity lower bound of~\cite{Kuhn_2016}

\begin{acks}
 Pierre Fraigniaud and Mikaël Rabie received additional support from ANR project DUCAT (ANR-20-CE48-0006). 
\end{acks}

\bibliographystyle{plain}

%%%%%%%%%%%%%%%%%%%%%%%%%%%%%%%%%%%%%%%%%%%%%%%%%%%
\end{document}